\documentclass[letterpaper]{article} 
\usepackage{aaai2026}  
\usepackage{times}  
\usepackage{helvet}  
\usepackage{courier}  
\usepackage[hyphens]{url}  
\usepackage{graphicx} 
\usepackage{natbib}  
\usepackage{caption} 
\usepackage{algorithm}
\usepackage{algorithmic}
\usepackage{enumitem}
\usepackage{xcolor}
\usepackage{amsthm}
\usepackage{amsmath}
\usepackage{amssymb}
\usepackage{mathtools}

\theoremstyle{definition}
\newtheorem{definition}{Definition} 

\theoremstyle{plain}
\newtheorem{theorem}{Theorem}       
\newtheorem{lemma}[theorem]{Lemma}  
\newtheorem{corollary}[theorem]{Corollary}

\theoremstyle{remark}
\newtheorem*{remark}{Remark}

\newcommand{\cmt}[1]{}
\DeclareMathOperator*{\argmin}{arg\,min}
\usepackage{newfloat}
\usepackage{listings}
\DeclareCaptionStyle{ruled}{labelfont=normalfont,labelsep=colon,strut=off} 
\floatstyle{ruled}
\newfloat{listing}{tb}{lst}{}
\floatname{listing}{Listing}
\title{Pacing Equilibria in Second-Price Auctions with Few Goods}
\author{
    Yiyang Huang\textsuperscript{\rm 1},
    Yonglei Yan\textsuperscript{\rm 1},
    Zihe Wang\textsuperscript{\rm 2}, 
    Zhengyang Liu\textsuperscript{\rm 1}\thanks{Zhengyang Liu is the corresponding author.}\\
}
\affiliations{
    \textsuperscript{\rm 1}Beijing Institute of Technology\\
    \textsuperscript{\rm 2}Renmin University of China\\
    
    \{huangyiyang,yanyonglei,zhengyang\}@bit.edu.cn, 
    wang.zihe@ruc.edu.cn
}

\usepackage{bibentry}

\begin{document}

\maketitle

\begin{abstract}
In this paper, we investigate the computation of second-price pacing equilibria (SPPEs), a foundational model in online advertising auctions. We present a polynomial-time algorithm for computing exact SPPEs in instances with a constant number of goods. Our core technique maps buyers' pacing multipliers to the highest bids on each good, effectively partitioning the parameter space into a set of distinct geometric cells. By enumerating these cells, we fix the relative ordering of the bids and reduce the problem of equilibrium computation to a linear feasibility program. Finally, we demonstrate that this tractability extends to large-scale markets with an arbitrary number of goods, provided the goods can be aggregated into a constant number of valuation types.
\end{abstract}

\section{Introduction}
Online advertising markets are often modeled as a sequence of auctions, where advertisers repeatedly bid based on their valuations while facing fixed budgets. In practice, the standard second-price auction model is insufficient: advertisers cannot simply bid their true values in every auction, as doing so would prematurely exhaust their budgets. A common mechanism to manage this issue is {\em pacing}. Each buyer $i$ is assigned a pacing multiplier $\alpha_i \in [0,1]$, and their bid on good $j$ is scaled down from $v_{ij}$ to the paced bid $\alpha_i v_{ij}$. This multiplier allows the buyer to control their total spending while preserving the mechanics of the second-price rule within each individual auction.

This dynamic motivates the study of second-price pacing equilibria (SPPE), where buyers choose pacing multipliers to manage budgets under the second-price rule. Since being formalized by~\citet{cksm:22}, pacing equilibria have become a cornerstone for algorithmic studies in budget-constrained advertising~\citep{BG:19, WYDK:23, BBF:24, LPSZ:24}.

However, computing SPPEs in general markets is fundamentally intractable. Recent complexity results \citep{CCR:23, CL:25} have firmly established that, barring additional structural constraints, we should not expect efficient algorithms for computing or even approximating SPPEs. This computational barrier raises a natural question: {\em which restricted market structures still allow for efficient computation?}
Recently, \citet{yan:26} made progress in this direction by providing an algorithm for instances where the number of buyers is constant, while leaving the regime with a constant number of goods as an open problem. This latter setting presents a fundamentally different and more severe challenge: because the number of pacing multipliers scales linearly with the buyer population, existing methods that rely on searching a constant-dimensional multiplier space are no longer viable. In this paper, we overcome this barrier. We demonstrate that bounding the number of goods provides sufficient structural geometry to recover tractability, even when the strategy space grows dynamically with the number of buyers.

\begin{theorem}\label{thm:main}
Given any instance of a second-price pacing game with a constant number $c$ of goods, there exists an algorithm that computes a second-price pacing equilibrium in polynomial time.
\end{theorem}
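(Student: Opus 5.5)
Our plan is to guess the combinatorial structure of the equilibrium in polynomially many ways, using the fact that only $c$ goods exist, and to certify each guess by a linear program. Recall the SPPE conditions. Buyer $i$ has pacing multiplier $\alpha_i\in[0,1]$ and bids $\alpha_i v_{ij}$. Each good $j$ is allocated fractionally, with $x_{ij}>0$ only for buyers whose bid attains the highest bid $h_j=\max_k \alpha_k v_{kj}$, and it is sold at the second-highest bid $p_j$. Budgets must hold, $\sum_j x_{ij}p_j\le B_i$, and a buyer with $\alpha_i<1$ must spend exactly $B_i$ (no unnecessary pacing). Existence is known from \citet{cksm:22}, so it suffices to find one point satisfying these conditions.

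\textbf{Step 1: reparametrize by the highest bids.} We work with the vector $h=(h_1,\dots,h_c)\in\mathbb{R}_{\ge 0}^c$ rather than with the $n$-dimensional vector $\alpha$. Given $h$, buyer $i$'s multiplier is essentially forced, since $\alpha_i v_{ij}\le h_j$ for all $j$. We therefore set $\alpha_i=\min\{1,\min_j h_j/v_{ij}\}$, or any smaller value if $i$ wins nothing. Buyer $i$ ties for the top on good $j$ exactly when $h_j/v_{ij}$ attains this minimum. Consider the hyperplanes $\log h_j-\log h_{j'}=\log v_{ij}-\log v_{ij'}$ and $\log h_j=\log v_{ij}$, over all $i,j,j'$. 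There are $O(nc^2)$ of them in the $c$-dimensional space of $\log h$, so the arrangement has $O((nc^2)^c)=\mathrm{poly}(n)$ faces, and these faces can be enumerated in polynomial time. Within one face (cell, including its lower-dimensional faces) the following are all fixed: for each buyer, which goods attain the minimum ratio, and whether $\alpha_i=1$.

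\textbf{Step 2: fix the price structure.} The second price $p_j$ is the second-largest value of $\alpha_k v_{kj}$. Within a face, each $\alpha_k$ equals $h_{j'}/v_{kj'}$ for a known $j'$, or equals $1$, so each candidate bid is a known linear function of $h$. For each good we guess which buyer sets the price, or whether two tied winners make $p_j=h_j$. This gives $n^c$ guesses in total, and each guess can be written as linear inequalities in $h$. Buyers who win nothing may have $\alpha_i$ lowered below the formula above, but doing so only lowers prices. We handle this by a guess of which non-winners keep $\alpha_i=1$ versus spend exactly $B_i=0$. The latter is impossible when budgets are positive, so every non-winner keeps $\alpha_i=1$, which removes the freedom.

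\textbf{Step 3: the feasibility program.} With the face and the price setters fixed, the variables are $h$ together with the allocation spending $s_{ij}=x_{ij}p_j$. Once $h$ is fixed, the prices $p_j$ are linear in $h$. The constraints are linear in $(h,s)$:
\begin{itemize}
\item $s_{ij}\ge 0$, with $s_{ij}=0$ unless $i$ ties for the top on $j$;
\item $\sum_i s_{ij}=p_j$ for each good with a positive top bid;
\item $\sum_j s_{ij}\le B_i$;
\item $\sum_j s_{ij}=B_i$ whenever the face says $\alpha_i<1$;
\item the face and price-setter inequalities on $h$.
\end{itemize}
The strict inequalities that define open cells are handled by working with closed faces and enumerating all lower-dimensional faces as well. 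The logarithmic coordinates are only a bookkeeping device: every hyperplane is linear in $h$ itself (e.g.\ $h_j v_{ij'}=h_{j'}v_{ij}$). So each guess yields an LP, and we solve polynomially many LPs. Any feasible solution gives an SPPE, and existence guarantees that some guess is feasible.

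\textbf{Main obstacle.} The delicate step is correctness of the reduction: showing that the multipliers recovered from $h$ reproduce exactly the allocation and tie pattern assumed by the LP. This includes degenerate cases such as goods with $h_j=0$, buyers with $v_{ij}=0$, and ties in the second price. We would first verify that every SPPE lies on some enumerated face with a consistent guess. Completeness then follows. Soundness requires checking that any LP solution, mapped back to $\alpha$, satisfies all SPPE conditions.
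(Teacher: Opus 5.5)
Your proposal is correct and is essentially the paper's proof. It uses the same reparametrization by the highest-bid vector (the paper's $\lambda$) and the same formula $\alpha_i=\min\{1,\min_j \lambda_j/v_{ij}\}$ forced by no unnecessary pacing. It enumerates all faces of the same arrangement $\lambda_j=v_{ij}$, $v_{ik}\lambda_j=v_{ij}\lambda_k$, guesses one second-price witness per good, and solves the same LP in $(\lambda,y)$ with $y_{ij}=x_{ij}p_j$.

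Your log coordinates and closed-face relaxation (in place of the paper's slack variable) are harmless variants. On $\bar F$ each $\alpha_i$ keeps its linear formula, $T_j(F)$ stays among the top bidders, and the witness inequalities still force the true second price. The degenerate cases and the soundness/completeness checks you defer are routine: they go through as in the paper via $\alpha_i>0$, deleting all-zero goods, discarding faces with some $T_j(F)=\emptyset$, and a dummy zero-bid witness.
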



\subsection{Related Works}
Budget management via pacing is a central focus in modern ad exchanges. The SPPE framework, formalized by~\citet{cksm:22}, provides robust descriptive power for these dynamics and is widely adopted for evaluating automated bidding strategies~\citep{BG:19, WYDK:23, BBF:24, LPSZ:24}. It is worth noting the sharp contrast between auction formats: while pacing and throttling games in first-price auctions admit efficient polynomial-time solutions~\citep{BCCIJE:07, CKK:21}, second-price auctions introduce non-linearities that complicate equilibrium computation.

For general second-price auctions, the intractability of SPPE is well-documented. \citet{CCR:23} proved that finding an SPPE is PPAD-complete even for inverse-polynomial approximations. The inapproximability bounds were further improved by~\citet{CL:25}, establishing PPAD-hardness for securing a $\gamma$-approximate SPPE for any constant $\gamma < 1/3$. These hardness results confirm that universal, efficient algorithms for arbitrary SPPE instances are highly impossible.

\paragraph{Comparison with \citet{yan:26}.}
Computing an SPPE fundamentally requires resolving the circular dependency between the allocation of goods and the buyers' pacing multipliers. This is challenging because the two are tightly coupled: allocations depend on the relative ordering of paced bids $\alpha_i v_{ij}$, while multipliers depend on the buyers' resulting total payments. Furthermore, the ``no unnecessary pacing'' condition (see Condition (d) in Definition~\ref{def:PE}) introduces a highly discrete component, requiring the algorithm to definitively determine whether each buyer is paced ($\alpha_i < 1$) or unpaced ($\alpha_i = 1$). 

Methodologically, our approach and that of \citet{yan:26} both build upon a \emph{cell decomposition} framework, a technique with a proven track record in solving complex market equilibria and optimal pricing problems \cite{devanur2008market, chen2018complexity, alaei2017computing}. The shared principle involves partitioning a continuous parameter space into a finite set of geometric cells. Within any given cell, discrete structural properties, such as the relative ordering of paced bids and the pacing status of individual buyers, are strictly fixed. This transformation reduces the highly non-linear equilibrium conditions to a sequence of tractable linear feasibility programs.

However, to address the complementary regime with a constant number of goods, we introduce a fundamentally different spatial representation. Our research diverges from \citet{yan:26} in the following three key aspects:

\begin{itemize}
    \item \textbf{Dimensionality of the Search Space:} In the constant-buyer regime, \citet{yan:26} directly decompose the $n$-dimensional multiplier space ($\alpha$-space), or equivalently, the $(n-1)$-dimensional ratio space $\left(\frac{\alpha_1}{\alpha_n}, \ldots, \frac{\alpha_{n-1}}{\alpha_n}\right)$. This space is partitioned by hyperplanes of the form $\alpha_i v_{ij}=\alpha_{i'}v_{i'j}$.For a constant number of buyers $n$, this generates only a polynomial number of cells. However, as $n$ increases, the hyperplane arrangement in this high-dimensional space produces an exponential number of cells, leading to a severe curse of dimensionality. 
    
    To bypass this bottleneck, we summarize the outcome of the pacing game using a vector $\lambda=(\lambda_1,\ldots,\lambda_c)$, where $\lambda_j$ records the highest paced bid on good $j\in[c]$. By decomposing this $c$-dimensional $\lambda$-space instead of the $\alpha$-space, we ensure the dimension remains strictly bounded, yielding a finite, polynomial number of geometric cells.
    
    \item \textbf{State Determination vs. Exhaustive Search:} To handle the ``no unnecessary pacing'' condition, \citet{yan:26} explicitly enumerates all possible discrete states of the buyers. While manageable for constant buyers (generating $2^c$ states), this approach requires an intractable exhaustive analysis over $2^n$ cases for general markets. Conversely, our algorithm leverages the geometric properties of the $\lambda$-space. Once the highest bid levels $\lambda$ are fixed, each buyer's pacing multiplier can be deterministically recovered as the largest multiplier consistent with these upper bounds. Inside any fixed cell $F$, the expression for each $\alpha_i$ resolves to a fixed linear function of $\lambda$. We dynamically derive the pacing state of every buyer directly from the geometric boundaries of the cell, unifying the previously discrete search into a single linear feasibility system.
    
    \item \textbf{Scalability and Market Realism:} The constant-buyer assumption in \citet{yan:26} is often at odds with real-world advertising markets, which typically feature thousands of competing advertisers. Their algorithm is essentially restricted to small-scale scenarios. In contrast, our framework naturally accommodates an arbitrary number of buyers. This reflects the reality of modern ad exchanges, where  impressions are numerous, they can often be aggregated into a constant number of distinct valuation types based on user segments or placement attributes.
\end{itemize}
\section{Preliminaries}

We recall the model of second-price pacing games and the equilibrium notion used throughout the paper.

\subsection{Second-Price Pacing Games}

A \emph{second-price pacing game} $G = (n, m, (v_{ij}), (B_i))$ involves $n$ buyers and $m$ indivisible goods. Each buyer $i$ has a valuation $v_{ij}$ for good $j$ and a strictly positive budget $B_i > 0$. In this game, each buyer $i$ chooses a \emph{pacing multiplier} $\alpha_i \in [0,1]$, effectively submitting a paced bid of $\alpha_i v_{ij}$ for each good $j$.

Goods are allocated via second-price auctions based on these paced bids. Let $h_j(\alpha) = \max_i \alpha_i v_{ij}$ denote the highest bid for good $j$, and $p_j(\alpha)$ denote the second-highest bid. The good is awarded to the highest bidder at price $p_j(\alpha)$. If multiple buyers tie for the highest bid, the good is allocated fractionally among them. In such tie-breaking cases, the unit price is simply $p_j(\alpha) = h_j(\alpha)$. We denote this fractional allocation by $x_{ij} \in [0,1]$, which can be interpreted as the probability that buyer $i$ wins the indivisible good $j$.

We now formally define a pacing equilibrium in second-price auctions.

\begin{definition}[Pacing Equilibrium]
\label{def:PE}
We say $(\alpha, x)$ with $\alpha = (\alpha_i) \in [0,1]^n$, $x = (x_{ij}) \in [0,1]^{nm}$, and $\sum_{i\in [n]} x_{ij} \le 1$ for all $j \in [m]$, is a {\em pacing equilibrium} of a second-price pacing game $G = (n, m, (v_{ij}), (B_i))$ if:
\begin{enumerate}[label=\normalfont(\alph*), leftmargin=*, itemsep=0pt]
    \item $x_{ij} >0$ implies $\alpha_i v_{ij} = h_j(\alpha)$;
    \item $h_j(\alpha) > 0$ implies $\sum_{i\in [n]} x_{ij} = 1$;
    \item $\sum_{j\in [m]} x_{ij}p_j(\alpha) \le B_i$;
    \item $\sum_{j\in [m]} x_{ij}p_j(\alpha) < B_i$ implies $\alpha_i = 1$.
\end{enumerate}
\end{definition}
We refer to Condition (d) as the ``no unnecessary pacing'' condition. Economically, this condition enforces that no buyer is subjected to excessive or unwarranted pacing. 

The existence of pacing equilibria was proved by \citet{cksm:22}.

\begin{theorem}[\citet{cksm:22}]
\label{thm:pacing-equilibrium-exists}
Any second-price pacing game admits a pacing equilibrium.
\end{theorem}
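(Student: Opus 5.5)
We prove Theorem~\ref{thm:pacing-equilibrium-exists} by a fixed-point argument on a smoothed game, followed by a limit. We first handle a standard degeneracy: buyers with $v_{ij}=0$ for all $j$ can be fixed at $\alpha_i=1$ with $x_{ij}=0$. Such buyers bid zero everywhere, pay nothing, and satisfy (a)--(d) trivially, so we assume every buyer has some positive value. Fix a small $\varepsilon>0$ and restrict multipliers to $[\varepsilon,1]^n$. The discontinuity of the second-price allocation under ties is what blocks a direct Brouwer argument, so the plan is to replace the hard allocation by a continuous smoothed allocation, find a fixed point, and then let the smoothing vanish.

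\textbf{Step 1 (smoothed game).} For a parameter $\delta>0$, define a continuous fractional allocation $x^\delta_{ij}(\alpha)$. One concrete choice is to give good $j$ to the bidders in proportion to $\exp\bigl(\alpha_i v_{ij}/\delta\bigr)$, restricted to bidders with $v_{ij}>0$. Define a continuous price $p^\delta_j(\alpha)$ that converges to the second-highest bid away from ties and to $h_j(\alpha)$ at ties. A soft second maximum works, capped so that $p^\delta_j\le h_j$. The spending $s^\delta_i(\alpha)=\sum_j x^\delta_{ij}p^\delta_j$ is then continuous in $\alpha$.

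\textbf{Step 2 (fixed point).} Define the map
\[
T_i(\alpha)=\min\Bigl\{1,\ \max\bigl\{\varepsilon,\ \alpha_i\cdot B_i/\max\{s^\delta_i(\alpha),\eta\}\bigr\}\Bigr\}
\]
for a tiny $\eta>0$. Equivalently, take $\alpha_i\mapsto \mathrm{proj}_{[\varepsilon,1]}\bigl(\alpha_i+B_i-s^\delta_i(\alpha)\bigr)$, which is simpler. The map is continuous on the compact convex set $[\varepsilon,1]^n$, so Brouwer's theorem gives a fixed point $\alpha^{\delta,\varepsilon}$. With the additive version, a fixed point has three possible cases. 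If $\alpha_i\in(\varepsilon,1)$, then $s_i=B_i$. If $\alpha_i=1$, then $s_i\le B_i$. If $\alpha_i=\varepsilon$, then $s_i\ge B_i$.

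\textbf{Step 3 (ruling out the lower boundary).} We want $\alpha_i=\varepsilon$ never to occur with $s_i>B_i$. Choose $\varepsilon<\min_i B_i/\bigl(\sum_j \max_k v_{kj}\bigr)$. I expect this step to be the main obstacle, because it is not true as stated: prices are set by the other bidders, so even a buyer bidding $\varepsilon$ may win and overpay. The fix is to note that $s_i\le\sum_j x_{ij}p_j$ and that $x^\delta_{ij}$ is small unless buyer $i$'s bid is near the top. If buyer $i$ bids near the top at multiplier $\varepsilon$, the price is at most $h_j\approx \varepsilon v_{ij}$ up to $O(\delta)$. So $s_i\le \varepsilon\sum_j v_{ij}+o(1)<B_i$ as $\delta\to0$. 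Buyers stuck at $\varepsilon$ are therefore, in the limit, within budget. They would need $\alpha_i=1$, but the fixed point condition forces $s_i\ge B_i$, which is a contradiction. Hence the case $\alpha_i=\varepsilon$ does not occur once $\delta$ is small.

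\textbf{Step 4 (limit).} Let $\delta\to0$ along a subsequence. The multipliers $\alpha^\delta\to\alpha^*\in[\varepsilon,1]^n$ and the allocations $x^\delta\to x^*$, both by compactness. In the limit, $x^*_{ij}>0$ only for top bidders, which gives (a). Each good with $h_j>0$ is fully allocated, since the softmax sums to one, which gives (b). The prices converge to the second price, or to $h_j$ at ties, by the construction of $p^\delta$. The budget relations pass to the limit: interior or paced buyers get equality, which gives (c) and (d), and unpaced buyers get $\le B_i$. Continuity of the spending expression along the subsequence is the only technical check. At ties, the chosen soft price must converge to $h_j$ exactly where the limiting allocation splits. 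This holds because tied top bids make the second-highest bid equal $h_j$.
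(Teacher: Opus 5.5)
The paper does not prove this statement itself; it imports it from \citet{cksm:22}, whose existence proof likewise smooths the tie-breaking allocation, applies Brouwer's theorem with multipliers bounded away from zero, and passes to the limit. Your sketch is correct along that same route: the bound $x^\delta_{ij}\le \exp\bigl(-(h_j-\alpha_i v_{ij})/\delta\bigr)$ makes Step~3 uniform in $\alpha$ and gives (a) in the limit, and the budget relations pass to the limit as you describe (minor simplifications: the exact second-highest bid is already continuous in $\alpha$, so no soft second maximum is needed, and the ``near the top'' threshold in Step~3 should be a gap $g$ with $g\to 0$ and $g/\delta\to\infty$ rather than $O(\delta)$).
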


Previously speaking, our algorithm relies on a cell-decomposition argument in a constant-dimensional parameter space. We will use following theorem on the number of faces generated by a hyperplane arrangement.

\begin{theorem}[\citet{zaslavsky:75,orlik:92}]\label{thm:cut}
Let $\mathcal{A}$ be a collection of $k$ affine hyperplanes in general position in $\mathbb{R}^d$. Then the number of $j$-dimensional faces ($0 \le j \le d$) induced by the arrangement is given by
\[
f_j = \binom{k}{d-j} \sum_{i=0}^{j} \binom{k-d+j}{i}.
\]

In particular, the number of $d$-dimensional cells (i.e., the connected components of $\mathbb{R}^d \setminus \bigcup \mathcal{A}$) is
\[
f_d = \sum_{i=0}^{d} \binom{k}{i}.
\]
\end{theorem}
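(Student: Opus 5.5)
We would prove the region count $f_d$ first, by induction on $k$ via deletion–restriction, and then obtain the general $f_j$ by applying the region count inside each $j$-dimensional flat of the arrangement. Throughout, ``general position'' means the following. Any $s \le d$ of the hyperplanes meet in an affine flat of dimension exactly $d-s$. Any $d+1$ of them have empty common intersection. We also need (and will check) that this property is inherited by restrictions.

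\textbf{Step 1 (regions).} Write $r(k,d)$ for the number of connected components of $\mathbb{R}^d\setminus\bigcup\mathcal{A}$ when $|\mathcal{A}|=k$ is in general position. The base cases are $r(0,d)=1$, and $r(k,0)=1$ because $\mathbb{R}^0$ is a point and no hyperplanes exist there.

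For the inductive step, fix $H\in\mathcal{A}$ and set $\mathcal{A}'=\mathcal{A}\setminus\{H\}$. The traces $\{H'\cap H : H'\in\mathcal{A}'\}$ form an arrangement $\mathcal{A}''$ of $k-1$ hyperplanes inside $H\cong\mathbb{R}^{d-1}$. This arrangement is in general position:
\begin{itemize}
\item an $s$-subset of traces with $s\le d-1$ corresponds to $s+1$ original hyperplanes, whose intersection has dimension $d-s-1=(d-1)-s$;
\item $d$ traces correspond to $d+1$ original hyperplanes, so their intersection is empty;
\item the traces are pairwise distinct, since two coinciding traces would give three original hyperplanes meeting in dimension $d-2$ instead of $d-3$ (and for $d\le 2$ the check is direct).
\end{itemize}

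Next, each region $C$ of $\mathcal{A}'$ is convex. Adding $H$ either leaves $C$ intact or splits it into exactly two pieces, and it splits $C$ exactly when $H\cap C\neq\emptyset$. The nonempty sets $H\cap C$ are precisely the regions of $\mathcal{A}''$ in $H$. Hence
\[
r(k,d)=r(k-1,d)+r(k-1,d-1).
\]
The expression $\sum_{i=0}^{d}\binom{k}{i}$ satisfies the same boundary values, and Pascal's rule shows it satisfies the same recurrence. This proves the formula for $f_d$.

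\textbf{Step 2 (faces of dimension $j$).} Every $j$-face $F$ has an affine hull. By general position, this hull is the intersection $L_S$ of exactly $d-j$ hyperplanes, for a uniquely determined subset $S$ of size $d-j$, namely the hyperplanes containing $F$. Conversely, every $(d-j)$-subset $S$ yields a $j$-dimensional flat $L_S$. There are $\binom{k}{d-j}$ such flats.

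Inside $L_S\cong\mathbb{R}^j$, the remaining $k-(d-j)$ hyperplanes restrict to an arrangement in general position, by iterating the inheritance argument of Step 1. The $j$-faces lying in $L_S$ are exactly the regions of this restricted arrangement. By Step 1 there are $\sum_{i=0}^{j}\binom{k-d+j}{i}$ of them in each flat. Summing over all $S$ gives the stated $f_j$.

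\textbf{Main obstacle.} The delicate point is the bijection in Step 1: we must show that the regions of $\mathcal{A}'$ cut by $H$ correspond one-to-one with the regions of $\mathcal{A}''$. This requires the convexity of regions and the fact that $H\cap C$ is a single connected region of $\mathcal{A}''$. We also need a careful statement that general position passes to restrictions, including in the low-dimensional edge cases $d\le 2$. Everything else reduces to a Pascal-identity computation.
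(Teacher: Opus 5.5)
Your proof is correct. The paper gives no argument for this statement; it quotes it from \citet{zaslavsky:75,orlik:92}, so the only route to compare against is the one in those sources.

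There, the region count is a corollary of Zaslavsky's theorem $r(\mathcal{A})=\sum_{X}|\mu(\hat 0,X)|$, summed over the intersection poset. In general position the flats are exactly the $L_S$ with $|S|\le d$, and each lower interval $[\hat 0,L_S]$ is Boolean. So every M\"obius value has absolute value $1$, and $\sum_{i\le d}\binom{k}{i}$ is just the number of such $S$. The face count $f_j=\sum_{\dim X=j}r(\mathcal{A}^X)$ is then exactly your Step 2.

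You replace the M\"obius-function machinery with the deletion--restriction recurrence, solved directly by Pascal's rule because restrictions of a generic arrangement are again generic. Your handling of the two delicate points is sound:
\begin{itemize}
\item the nonempty sets $H\cap C$ are convex, relatively open, and partition $H\setminus\bigcup\mathcal{A}'$, so they are exactly the regions of $\mathcal{A}''$;
\item general position, including distinctness of traces, passes to restrictions.
\end{itemize}
One point to make explicit: for $d=1$, and for $j=0$ in Step 2, the traces are empty rather than hyperplanes of a point. So $r(\cdot,0)=1$ there means ``the point is a single region,'' not an arrangement of $k-1$ hyperplanes in $\mathbb{R}^0$.

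Your route also has a side benefit. The identity $r(\mathcal{A})=r(\mathcal{A}')+r(\mathcal{A}'')$ holds for any arrangement, with $\mathcal{A}''$ having at most $k-1$ distinct nonempty traces. Add the fact that every $j$-dimensional flat is cut out by some $d-j$ of the hyperplanes, and the same induction gives
\[
f_j\le\binom{k}{d-j}\sum_{i\le j}\binom{k-d+j}{i}
\]
without assuming general position. That inequality is the form the paper actually relies on in Lemma~\ref{lem:constant-goods-cells}, whose arrangement is far from generic: it consists of families of parallel hyperplanes $\lambda_j=v_{ij}$ and ratio hyperplanes through the origin.
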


Before proceeding to the main proof, we introduce the necessary setup.
\subsection{Model Setup and $\lambda$-Space Reformulation.}

We also allow a dummy witness, indexed by \(0\), to represent the case where the second-price payment of a good is zero.

\begin{remark}
The dummy witness is only a notational device. It is not a real buyer, receives no allocation, and has no budget constraint. For every good \(j\in[m]\), we define \(v_{0j}=0\) and \(\alpha_0=1\), so its paced bid is always \(\alpha_0 v_{0j}=0\). Thus, choosing \(r_j=0\) simply means that the witness for the second price of good \(j\) is the dummy buyer, i.e., the second price is zero.
\end{remark}

For each good \(j\in [m]\), we define
\begin{equation}
\label{eq:lambda-definition}
\lambda_j :=\max_{i\in [n]}\alpha_i v_{ij},
\end{equation}
which denotes the highest paced bid on good \(j\). Note that \(\lambda_j\) serves as an upper bound for the bids and is not necessarily the true price of the good (which is determined by the second-highest paced bid). The vector \(\lambda=(\lambda_1,\ldots,\lambda_c)\) is central to our algorithm because it allows us to completely recover the pacing multipliers \(\alpha = (\alpha_1, \ldots, \alpha_n)\).

Before deriving the exact connection between \(\lambda\) and \(\alpha\), we establish two simplifying lemmas.

\begin{lemma}\label{lem:positive-alpha}
In any SPPE, every buyer has a strictly positive pacing multiplier. 
\end{lemma}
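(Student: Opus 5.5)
The plan is a short argument by contradiction that uses only Conditions (a) and (d) of Definition~\ref{def:PE} and the fact that budgets are strictly positive. Suppose $(\alpha,x)$ is an SPPE and some buyer $i$ has $\alpha_i = 0$. Since $\alpha_i \neq 1$, the contrapositive of the ``no unnecessary pacing'' Condition (d) forces buyer $i$ to spend her whole budget:
\[
\sum_{j\in[m]} x_{ij}\,p_j(\alpha) = B_i .
\]
Because $B_i > 0$ by the model assumptions, this spending is strictly positive.

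Next I show that buyer $i$'s spending is in fact zero. Consider any good $j$ with $x_{ij}>0$. Condition (a) gives $h_j(\alpha) = \alpha_i v_{ij} = 0$. The price is the second-highest paced bid, or equal to $h_j(\alpha)$ in the tie case, so in either case $0 \le p_j(\alpha) \le h_j(\alpha) = 0$; here I use that all valuations are nonnegative, so every paced bid is nonnegative. Hence each term $x_{ij}p_j(\alpha)$ vanishes: goods with $x_{ij}=0$ contribute nothing, and goods with $x_{ij}>0$ have zero price. Therefore $\sum_j x_{ij}p_j(\alpha) = 0 < B_i$, which contradicts the previous paragraph. So every buyer has $\alpha_i>0$.

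There is no real obstacle. The only point needing care is justifying $p_j(\alpha)\le h_j(\alpha)$ and $p_j(\alpha)\ge 0$ in both the tie and no-tie cases. This follows directly from the definitions of $h_j$ and $p_j$ together with nonnegative valuations. The dummy witness with bid $0$ is consistent with this, since it only ever sets the price to $0$.
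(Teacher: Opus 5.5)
Your proof is correct and uses the same contradiction as the paper: $\alpha_i=0$ forces buyer $i$'s spending to be zero, so $B_i>0$ and Condition (d) give $\alpha_i=1$. You are more careful than the paper, which only asserts that the payment is zero ``regardless of allocation''; you justify it with Condition (a), which makes any good with $x_{ij}>0$ have price $0$. One small point: the contrapositive of (d) alone gives spending $\ge B_i$, and equality would also need (c), but $\ge B_i>0$ is all your argument uses.
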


\begin{proof}
Suppose, for contradiction, that \(\alpha_i=0\) for some buyer \(i\in [n]\), so buyer \(i\)'s paced bids are all zero. Consequently, their total payment must be zero regardless of their allocation. Because their budget \(B_i>0\) is strictly positive, they do not exhaust their budget. The ``no unnecessary pacing'' condition then dictates that \(\alpha_i=1\), which contradicts \(\alpha_i=0\). Thus, \(\alpha_i>0\) for all \(i\in [n]\).
\end{proof}

\begin{lemma}\label{lem:remove-zero-lambda-goods}
In any SPPE, if there exists a good \(j'\in [m]\) with \(\lambda_{j'}=0\), removing \(j'\) preserves the equilibrium.
\end{lemma}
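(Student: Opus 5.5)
The plan is to check directly that the restriction of an SPPE $(\alpha,x)$ to the reduced instance $G'$, obtained by deleting good $j'$, satisfies all four conditions of Definition~\ref{def:PE}. Concretely, we keep the same multipliers $\alpha$ and the allocation $x' = (x_{ij})_{i\in[n],\, j\neq j'}$.

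\textbf{Step 1: the deleted good carries no payment.} Since $\lambda_{j'} = h_{j'}(\alpha) = 0$, every paced bid on $j'$ is zero. Hence the second-highest bid also vanishes: $p_{j'}(\alpha) = 0$. Whatever fractional allocation $x_{ij'}$ was used, buyer $i$ pays $x_{ij'}p_{j'}(\alpha) = 0$ for good $j'$. Therefore
\[
\sum_{j\neq j'} x_{ij}\,p_j(\alpha) \;=\; \sum_{j\in[m]} x_{ij}\,p_j(\alpha)
\]
for every buyer $i$.

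\textbf{Step 2: per-good conditions are unaffected.} For each remaining good $j\neq j'$, the quantities $h_j(\alpha)$ and $p_j(\alpha)$ depend only on the bids $(\alpha_i v_{ij})_i$ on good $j$. These bids are unchanged by deleting $j'$. So conditions (a) and (b) for good $j$ in $G'$ are literally the same statements as in $G$, and they hold. The feasibility constraint $\sum_i x_{ij}\le 1$ is likewise inherited.

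\textbf{Step 3: budget conditions transfer.} By Step 1, each buyer's total payment in $G'$ equals the payment in $G$. Condition (c) therefore carries over directly. For condition (d), suppose a buyer's spending in $G'$ is strictly below $B_i$. Then it is also strictly below $B_i$ in $G$, so $\alpha_i = 1$ by the equilibrium property in $G$.

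\textbf{Converse direction.} One should probably also note that an SPPE of $G'$ lifts back to $G$. To do this, keep $\alpha$ and allocate $j'$ to any buyer attaining $h_{j'}(\alpha)$. This requires that the maximum bid on $j'$ remains $0$ under the new $\alpha$, which holds when $v_{ij'} = 0$ for all $i$. That is the case here: by Lemma~\ref{lem:positive-alpha}, $\alpha_i > 0$ for all $i$, so $\lambda_{j'} = 0$ forces $v_{ij'} = 0$ for every buyer.

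With that observation the lifted allocation of $j'$ has price $0$. Conditions (a)--(d) in $G$ then follow exactly as in Steps 1--3 run backwards.

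I expect this point to be the only real subtlety. It is what justifies deleting such goods up front in the algorithm, since a good with zero value for all buyers is irrelevant in every equilibrium and not just the one under consideration. Everything else is routine bookkeeping.
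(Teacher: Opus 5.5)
Your proof is correct and takes the same approach as the paper. The paper also argues that all paced bids on $j'$ are zero, so $j'$ has second price zero and its allocation adds nothing to any buyer's spending; hence removing it (or conversely allocating it at price zero) leaves conditions (a)--(d) intact. Your condition-by-condition check, and your use of Lemma~\ref{lem:positive-alpha} in the converse to get $v_{ij'}=0$ for all $i$, spell out what the paper states briefly; the paper invokes that same deduction right after the lemma to justify deleting all-zero-value goods.
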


\begin{proof}
If \(\lambda_{j'}=0\), then \(\alpha_i v_{ij'}=0\) for all \(i\in [n]\). This implies that all buyers are tied for the highest paced bid of zero, and the second-price payment for \(j'\) is exactly zero. Because allocating this good incurs no payment, its allocation does not affect any buyer's budget constraints or pacing status. Thus, removing \(j'\), or conversely allocating it arbitrarily at a price of zero in a reduced instance, leaves all equilibrium conditions unchanged.
\end{proof}

By Lemma~\ref{lem:remove-zero-lambda-goods}, we can assume without loss of generality that \(\lambda_j > 0\) for all remaining goods. Specifically, since \(\alpha_i > 0\) (Lemma~\ref{lem:positive-alpha}), we simply remove any good \(j\) where \(v_{ij} = 0\) for all \(i \in [n]\). 

With these zero-value goods removed, we can now express each buyer's multiplier \(\alpha_i\) strictly in terms of \(\lambda\). By the definition of \(\lambda_j\), we have \(\alpha_i v_{ij}\leq \lambda_j\) for all goods \(j\). Combined with the baseline constraint \(\alpha_i\leq 1\), each buyer's pacing multiplier is bounded by:
\[
\alpha_i
\le
\min\left(
\{1\}\cup
\left\{
\frac{\lambda_j}{v_{ij}}: j\in [m],\ v_{ij}>0
\right\}
\right).
\]

We claim this upper bound must be tight. Suppose for contradiction that it is strict for some buyer \(i\). This implies \(\alpha_i < 1\), and crucially, \(\alpha_i v_{ij} < \lambda_j\) for every good \(j\) where \(v_{ij} > 0\). This means buyer \(i\) strictly fails to attain the highest paced bid on any good they value. Consequently, they receive no goods and pay zero. Since \(B_i>0\), their budget is not exhausted. The ``no unnecessary pacing'' condition then forces \(\alpha_i=1\), contradicting \(\alpha_i < 1\). 

Therefore, the upper bound is tight, and we obtain the deterministic formula for each $i\in[n]$:
\begin{equation}
\label{eq:alpha-lambda}
\alpha_i(\lambda)
=
\min\left(
\{1\}\cup
\left\{
\frac{\lambda_j}{v_{ij}}: j\in [m],\ v_{ij}>0
\right\}
\right).
\end{equation}

Eq.~\eqref{eq:alpha-lambda} allows us to recover the multiplier vector \(\alpha\) directly from the highest-bid vector \(\lambda\). Furthermore, once \(\lambda\) is fixed, the buyers' paced bids \(\alpha_i v_{ij}\) are also fixed, allowing us to immediately identify the highest bidders for each good.
\paragraph{Partitioning the \(\lambda\)-space.}
To partition the \(\lambda\)-space where \(\lambda_j > 0\) for all \(j\) by Lemma~\ref{lem:remove-zero-lambda-goods}, we observe that the expression for \(\alpha_i(\lambda)\) is piecewise linear. Its structural form changes only when two terms within the minimum function become equal. Thus, for each buyer \(i\), we define bounding hyperplanes by comparing the constant term \(1\) with each ratio \(\lambda_j/v_{ij}\), and by comparing \(\lambda_j/v_{ij}\) with \(\lambda_k/v_{ik}\).

\begin{remark}
Notably, our formulation explicitly restricts the comparison set to goods where \(v_{ij} > 0\). Because a buyer will never attain the highest paced bid on a good for which they hold zero value, these terms are safely excluded. Consequently, all denominators \(v_{ij}\) appearing in the hyperplane equations are strictly positive, ensuring the arrangement is mathematically well-defined.
\end{remark}

The comparison \( 1  = \lambda_j/v_{ij} \)
gives the hyperplane
\begin{equation}
\label{eq:coordinate-hyperplane}
\lambda_j=v_{ij},
\qquad 
\forall i\in [n],\ \forall j\in [m].
\end{equation}

Similarly, the comparison \(\lambda_j/v_{ij}=\lambda_k/v_{ik}\) gives the hyperplane
\begin{equation}
\label{eq:ratio-hyperplane}
\frac{\lambda_j}{\lambda_k}=\frac{v_{ij}}{v_{ik}},
\qquad
\forall i\in [n],\ \forall j,k\in [m],\text{ s.t. }\ j<k.
\end{equation}

These hyperplanes induce a decomposition of the \(\lambda\)-space into relatively open faces of various dimensions. We refer to all such relatively open faces as \emph{cells}, and let \(\mathcal{F}\) denote the collection of these cells.

\begin{remark}
We include lower-dimensional faces in \(\mathcal{F}\) because ties among paced bids may occur exactly on the boundaries of the hyperplane arrangement. Excluding these faces would miss equilibria involving tied highest bids.
\end{remark}

Within any fixed cell \(F\in\mathcal{F}\), the relative ordering of all terms in \(\alpha\)'s expression is strictly fixed for every buyer \(i\). Consequently, we can deterministically identify which term achieves the minimum, thereby determining the exact expression for \(\alpha_i\). Equivalently, this structural property allows us to express the multiplier vector \(\alpha\) as a fixed linear function of \(\lambda\).

Consequently, we now define the following set of minimizers:
\[
M_i(F)=
\argmin_{k\in \{0\}\cup\{j:v_{ij}>0\}} t_{ik}(\lambda),
\]
where
\[
t_{i0}(\lambda)=1,
\quad
t_{ij}(\lambda)=\frac{\lambda_j}{v_{ij}}
\qquad \text{for } j\in[m]\text{ with }v_{ij}>0.
\]

The set \(M_i(F)\) denotes the set of goods indices whose corresponding terms attain the minimum within the cell \(F\). In other words, for every good in \(M_i(F)\), the corresponding term can represent $\alpha_i$. If \(0\in M_i(F)\), then \(\alpha_i(\lambda)=1\); and \(\alpha_i(\lambda)=\lambda_{j} / {v_{ij}}\) for any \(j\in M_i(F)\), otherwise.

\begin{remark}
The index \(0\) in \(M_i(F)\) corresponds to the constant term \(1\) in the minimum defining \(\alpha_i(\lambda)\). Thus \(0\in M_i(F)\) means that the constraint \(\alpha_i\le 1\) is binding, and hence buyer \(i\) is unpaced.
\end{remark}

As above, within each fixed cell \(F\), \(M_i(F)\) remains unchanged, so \(\alpha\) admits a fixed linear representation in terms of \(\lambda\).

Similarly, for each fixed cell \(F\in\mathcal{F}\) and each good \(j\in [m]\), we define:
\[
T_j(F)
:=
\{\, i\in [n] : j\in M_i(F) \,\},
\]
which is exactly the set of buyers whose bid attains the highest paced bid level on good \(j\). Indeed, for any buyer \(i\) with \(v_{ij}>0\), we have
\[
j\in M_i(F)
\ \Longleftrightarrow\
\alpha_i(\lambda)=\frac{\lambda_j}{v_{ij}}
\ \Longleftrightarrow\
\alpha_i(\lambda)v_{ij}=\lambda_j.
\]
So \(T_j(F)\) is precisely the set of buyers whose paced bid on good \(j\) equals \(\lambda_j\). Within the cell \(F\), this set is also fixed.

\section{Proof of Theorem~\ref{thm:main}}
With the necessary foundations established, we now prove our main result. We first present our algorithm, followed by an analysis of its correctness and time complexity.

\subsection{Our Algorithm}
We now present our algorithmic framework for instances with a constant number \(c\) of goods, establishing Theorem~\ref{thm:main}. Our approach consists of two integrated procedures: Algorithm~\ref{alg:second-price-witness} acts as a crucial subroutine to systematically enumerate the valid second-price witness sets for each good, while our primary procedure, Algorithm~\ref{alg:constant-goods}, utilizes these sets to compute the exact pacing equilibrium.

\begin{algorithm}[H]
\caption{Finding a pacing equilibrium in second-price auctions with constant goods.}
\label{alg:constant-goods}
\begin{algorithmic}[1]
\REQUIRE Set of buyers \(N=\{1,\ldots,n\}\), goods \(M=\{1,\ldots,c\}\), budget \(B_i\) for each \(i\in [n]\), and valuation matrix \((v_{ij})\).
\ENSURE Pacing multipliers \((\alpha_i)\) and allocations \((x_{ij})\).

\STATE Introduce a dummy buyer.
\STATE Remove all \(j\) which \(v_{ij}=0\) for \( \forall i\in[n]\), i.e., \(\lambda_j = 0\).

\STATE Construct the following hyperplanes in the \(\lambda\)-space:
\[
\lambda_j=v_{ij},
\qquad
\forall i\in [n],\ \forall j\in [m],
\]
and
\[
\frac{\lambda_j}{\lambda_k}=\frac{v_{ij}}{v_{ik}},
\qquad
\forall i\in [n],\ \forall j,k\in [m],\ j<k.
\]

\STATE // Enumerate all cells of this hyperplane arrangement.
\FOR{each cell \(F \in \mathcal{F} \)}
    \FOR{every buyer \(i\in [n]\)}
        \STATE Construct \(M_i(F)\) and determine \(\alpha_i(\lambda)\).
    \ENDFOR

    \FOR{each good \(j\in [m]\)}
        \STATE Construct \(T_j(F)\).
        \STATE Use Algorithm~\ref{alg:second-price-witness} to determine the second-price witness set \(R_j(F)\).
        \IF{Algorithm~\ref{alg:second-price-witness} returns infeasible}
            \STATE Skip this \(F\) and continue to the next cell.
        \ENDIF
    \ENDFOR
    
    \FOR{each witness tuple \(r\)}
    \STATE Construct the corresponding linear system to find \((\lambda_j)\) and \((x_{ij})\).
    \IF{there is a feasible solution}
        \STATE Recover \((\alpha_i)\) and \((x_{ij})\), and return them.
    \ENDIF
\ENDFOR

\ENDFOR
\end{algorithmic}
\end{algorithm}

The concrete linear system used by the algorithm~\ref{alg:constant-goods} (Line 17) will be specified in our main proof of Theorem~\ref{thm:main}. 

\subsection{Correctness of Our Algorithm}
Recall our main theorem:

\noindent\textbf{Theorem~\ref{thm:main}.}
Given any instance of a second-price pacing game with a constant number \(c\) of goods, there exists an algorithm that computes a second-price pacing equilibrium in polynomial time.

\begin{proof}[Proof of Theorem~\ref{thm:main}]
By Theorem~\ref{thm:pacing-equilibrium-exists}, every instance of the second-price pacing game exists at least one SPPE. We now declare that Algorithm~\ref{alg:constant-goods} is exactly the algorithm mentioned in this theorem. So it remains to show that Algorithm~\ref{alg:constant-goods} finds such an SPPE in polynomial time.

It is sufficient to prove that, for each fixed cell \(F\in\mathcal{F}\), the algorithm can decide whether \(F\) contains an equilibrium in polynomial time. The polynomial bound on the number of cells is established in the time-complexity analysis below. If so, we recover one as an SPPE.

\begin{algorithm}[H]
\caption{Finding second-price witness set for a good.}
\label{alg:second-price-witness}
\begin{algorithmic}[1]
\REQUIRE A good \(j\), a cell \(F\), the set \(T_j(F)\), and the expressions
\(\alpha_i(\lambda)\) determined by \(F\).
\ENSURE A set \(R_j(F)\) of candidate second-price witnesses, or infeasible.

\IF{\(T_j(F)=\emptyset\)}
    \RETURN infeasible

\ELSIF{\(|T_j(F)|\geq 2\)}
    \STATE Set \( R_j(F)=T_j(F)\).
    \RETURN \(R_j(F)\)

\ELSE
    \STATE Let \(T_j(F)=\{w_j\}\).
    \STATE Initialize \(R_j(F)=\emptyset\).

    \FOR{each \(r_j\in([n]\setminus\{w_j\})\cup\{0\}\)}
        \STATE Consider the following linear inequality system:
        \[
        \alpha_{r_j}(\lambda)v_{r_jj}
        \geq
        \alpha_i(\lambda)v_{ij},
        \qquad
        \forall i\in[n]\setminus\{w_j\}.
        \]
        \IF{the above system is feasible together with the current cell constraints}
            \STATE Add \(r_j\) to \(R_j(F)\).
        \ENDIF
    \ENDFOR

    \IF{\(R_j(F)=\emptyset\)}
        \RETURN infeasible
    \ELSE
        \RETURN \(R_j(F)\)
    \ENDIF
\ENDIF

\end{algorithmic}
\end{algorithm}

Given a fixed cell \( F\), the set \(M_i(F)\) determines which term attains the minimum in the expression for \(\alpha_i(\lambda)\). In particular, if both \(1\) and \(\lambda_j/v_{ij}\) attain the minimum, then both \(0\) and \(j\) belong to \(M_i(F)\). In this way, such a buyer is treated as a buyer with \(\alpha_i(\lambda)=1\).
 
Following the case distinction in~\citet{yan:26} and by Lemma~\ref{lem:positive-alpha}, we can classify the buyers into two types, \(\alpha_i = 1\) and \( 0 < \alpha_i < 1\). Notably, without \(\lambda\), we cannot enumerate all possible classifications described classifying above in polynomial time. Indeed, there would be \(2^n\) possible patterns, which is not polynomial in \(n\), making their method impossible for constant goods instances. 

But after our transformation from \(\alpha\) to \(\lambda\), we can determine each buyer \(i\) in a fixed cell \(F\) is whether \(\alpha_i(\lambda)=1\) or \(0 < \alpha_i(\lambda)<1\) by checking the set \(M_i(F)\). Specifically, we have
\begin{equation}
\label{eq:alpha-fixed-cell}
\alpha_i(\lambda)=
\begin{cases}
1, & 0\in M_i(F),\\[4pt]
\dfrac{\lambda_{j}}{v_{ij}}, & 0\notin M_i(F),\ j\in M_i(F).
\end{cases}
\end{equation}

The ``no unnecessary pacing'' condition in the definition  imposes different requirements on buyers with \(\alpha_i=1\) and those with \(0 < \alpha_i<1\). In fact, if a buyer does not exhaust the budget, its pacing multiplier should be forced at 1, meaning that a buyer who does not exhaust its budget must remain unpaced.
 
In order to formulate the equilibrium conditions as a linear program, we introduce two sets to distinguish these two types of buyers. For each fixed cell \(F\), we define:
\[
\begin{aligned}
E(F) &:= \{i\in [n]:0\in M_i(F)\}, \\
L(F) &:= \{i\in [n]:0\notin M_i(F)\}.
\end{aligned}
\]

Equivalently,
\begin{align*}
E(F) &= \{i\in [n]:\alpha_i(\lambda)=1\}, \\
L(F) &= \{i\in [n]:\alpha_i(\lambda)<1\}. 
\end{align*}

By definition, buyers in \(E(F)\) are unpaced, and buyers in \(L(F)\) are paced. These two sets are determined directly by the cell \(F\).

We now construct the linear system for a fixed cell \(F\) type by type.

\paragraph{Cell constraints.}
Since \(F\) is a relatively open cell induced by our hyperplane arrangement, the sign of each defining comparison is fixed throughout \(F\). For every \(i\in[n]\) and every \(j\in[m]\) with \(v_{ij}>0\), the sign of \( (\lambda_j-v_{ij}) \) is fixed in \(F\). According to the cell \(F\), we impose one of
\[
\lambda_j < v_{ij},\qquad
\lambda_j = v_{ij},\qquad
\text{or}\qquad
\lambda_j > v_{ij}.
\]

Similarly, for every \(i\in[n]\) and every pair \(j,k\in[m]\) with \(j<k\) and \(v_{ij},v_{ik}>0\), the sign of \( (v_{ik}\lambda_j-v_{ij}\lambda_k) \) is fixed in \(F\). According to the cell \(F\), we impose one of
\[
v_{ik}\lambda_j < v_{ij}\lambda_k,\quad
v_{ik}\lambda_j = v_{ij}\lambda_k,\quad
\text{or}\quad
v_{ik}\lambda_j > v_{ij}\lambda_k.
\]

By Lemma~\ref{lem:remove-zero-lambda-goods}, after removing all goods with zero highest paced bid, we impose
\[
\lambda_j>0,
\qquad \forall j\in[m].
\]

For convenience, let \( A_F^{<},\ A_F^{=},\ A_F^{>} \)
be the partition of pairs \((i,j)\). This can be determined according to whether \(\lambda_j-v_{ij}\) is negative, zero, or positive throughout \(F\).

Similarly, let \( B_F^{<},\ B_F^{=},\ B_F^{>} \) be the partition of triples \((i,j,k)\), according to whether \(v_{ik}\lambda_j-v_{ij}\lambda_k\) is negative, zero, or positive throughout \(F\).

In general, the corresponding cell constraints are
\begin{equation}
\label{eq:cell-constraints}
\begin{dcases}
\lambda_j < v_{ij},
& \forall (i,j)\in A_F^{<},\\
\lambda_j = v_{ij},
& \forall (i,j)\in A_F^{=},\\
\lambda_j > v_{ij},
& \forall (i,j)\in A_F^{>},\\[1mm]
v_{ik}\lambda_j < v_{ij}\lambda_k,
& \forall (i,j,k)\in B_F^{<},\\
v_{ik}\lambda_j = v_{ij}\lambda_k,
& \forall (i,j,k)\in B_F^{=},\\
v_{ik}\lambda_j > v_{ij}\lambda_k,
& \forall (i,j,k)\in B_F^{>},\\[1mm]
\lambda_j > 0,
& \forall j\in[m].
\end{dcases}
\end{equation}

\paragraph{Witness and budget constraints.}
In Algorithm~\ref{alg:second-price-witness}, for every good \(j\in[m]\) we obtain a set \(R_j(F)\) of second-price witnesses.

We then enumerate one witness tuple
\[
r=(r_j)_{j\in[m]}\in \prod_{j\in[m]} R_j(F).
\]

For this tuple, the second-price payment of every good \(j\) is defined by the paced bid of its selected witness \(r_j\):
\[
p_j^{F,r}(\lambda)
=
\begin{cases}
\alpha_{r_j}(\lambda)v_{r_jj}, & r_j\in[n],\\[1mm]
0, & r_j=0.
\end{cases}
\]

When the cell \(F\) and the witness tuple \(r\) are clear from context, we simply use \( p_j(\lambda)\).

Each \(r_j\) is associated only with good \(j\): it is the selected witness whose paced bid determines the second-price payment of good \(j\). Thus, once the tuple \(r\) is fixed, each payment \(p_j(\lambda)\) is determined separately by the corresponding component \(r_j\).

We next impose the witness constraints. If \(T_j(F)=\{w_j\}\), \(w_j\) is the only highest paced bidder on good \(j\). In this case, our selected witness \(r_j\in R_j(F)\subseteq [n]\setminus\{w_j\}\) must have paced bid at least as large as every other losing bidder. We impose
\[
\alpha_{r_j}(\lambda)v_{r_jj}
\geq
\alpha_i(\lambda)v_{ij},
\qquad
\forall i\in[n]\setminus\{w_j\}.
\]

If \(|T_j(F)|\geq 2\), our selected witness \(r_j\in R_j(F)=T_j(F)\) already attains the highest paced bid on good \(j\), so no additional witness constraint is needed.

After that, we introduce payment variables
\[
y_{ij}\geq 0,
\qquad
\forall i\in[n],\ j\in[m],
\]
where
$y_{ij} = x_{ij} p_j$, for any $i\in[n],\ j\in[m]$,
which denotes the amount paid by buyer \(i\) for good \(j\).

A buyer can pay for good \(j\) only if the buyer is among the highest paced bidders for that good. Within the cell \(F\), this set is \(T_j(F)\). We impose
\[
y_{ij}=0,
\qquad
\forall j\in[m],\ \forall i\notin T_j(F).
\]

The total payment collected from good \(j\) must equal its second-price payment. For every \(j\in[m]\), we impose
\[
\sum_{i\in T_j(F)} y_{ij}
=
p_j(\lambda).
\]

The budget feasibility and ``no unnecessary pacing'' constraints force every unpaced buyer:
\[
\sum_{j\in[m]} y_{ij}\le B_i,
\qquad
\forall i\in E(F),
\]
and for every paced buyer,
\[
\sum_{j\in[m]} y_{ij}=B_i,
\qquad
\forall i\in L(F).
\]

In general, the corresponding witness and budget constraints are
\begin{equation}
\label{eq:witness-budget-constraints}
\begin{dcases}
\alpha_{r_j}(\lambda)v_{r_jj}
\geq
\alpha_i(\lambda)v_{ij},
& \begin{aligned}[t]
\forall j\in[m]\text{ with }T_j(F)=\{w_j\},\\
\forall i\in[n]\setminus\{w_j\},
\end{aligned}\\[1mm]
y_{ij}\geq 0,
& \forall i\in[n],\ \forall j\in[m],\\[1mm]
y_{ij}=0,
& \forall j\in[m],\ \forall i\notin T_j(F),\\[1mm]
\displaystyle
\sum_{i\in T_j(F)} y_{ij}
=
p_j(\lambda),
& \forall j\in[m],\\[1mm]
\displaystyle
\sum_{j\in[m]} y_{ij}\leq B_i,
& \forall i\in E(F),\\[1mm]
\displaystyle
\sum_{j\in[m]} y_{ij}=B_i,
& \forall i\in L(F).
\end{dcases}
\end{equation}

\paragraph{Other consistency constraints.}
Now we also impose several consistency constraints connecting \(\alpha\) and \(\lambda\). 

First, buyers in \(E(F)\) are unpaced, while buyers in \(L(F)\) are paced. We impose
\[
\begin{dcases}
\alpha_i(\lambda)=1,
& \forall i\in E(F),\\[1mm]
0<\alpha_i(\lambda)<1,
& \forall i\in L(F).
\end{dcases}
\]

Second, the vector \(\lambda\) must dominate every buyer's paced bid on every good. We impose
\[
\lambda_j
\geq
\alpha_i(\lambda)v_{ij},
\qquad
\forall i\in[n],\ \forall j\in[m].
\]

Last, for every buyer who belongs to the winning set \(T_j(F)\), the paced bid must exactly attain the highest paced bid level. We impose
\[
\lambda_j
=
\alpha_i(\lambda)v_{ij},
\qquad
\forall j\in[m],\ \forall i\in T_j(F).
\]

In general, the corresponding consistency constraints are
\begin{equation}
\label{eq:consistency-constraints}
\begin{dcases}
\alpha_i(\lambda)=1,
& \forall i\in E(F),\\[1mm]
0<\alpha_i(\lambda)<1,
& \forall i\in L(F),\\[1mm]
\lambda_j\geq \alpha_i(\lambda)v_{ij},
& \forall i\in[n],\ \forall j\in[m],\\[1mm]
\lambda_j= \alpha_i(\lambda)v_{ij},
& \forall j\in[m],\ \forall i\in T_j(F).
\end{dcases}
\end{equation}

\paragraph{Linear system.}
 Now combine Eqs.~\eqref{eq:cell-constraints}, \eqref{eq:witness-budget-constraints} and \eqref{eq:consistency-constraints}. For a fixed cell \(F\) and a fixed witness tuple \(r=(r_j)_{j\in[m]}\), the algorithm considers the following linear system:

\begin{equation}
\label{eq:linear-system}
\begin{dcases}
\lambda_j < v_{ij},
& \forall (i,j)\in A_F^{<},\\
\lambda_j = v_{ij},
& \forall (i,j)\in A_F^{=},\\
\lambda_j > v_{ij},
& \forall (i,j)\in A_F^{>},\\[1mm]
v_{ik}\lambda_j < v_{ij}\lambda_k,
& \forall (i,j,k)\in B_F^{<},\\
v_{ik}\lambda_j = v_{ij}\lambda_k,
& \forall (i,j,k)\in B_F^{=},\\
v_{ik}\lambda_j > v_{ij}\lambda_k,
& \forall (i,j,k)\in B_F^{>},\\[1mm]
\lambda_j > 0,
& \forall j\in[m],\\[1mm]
\alpha_i(\lambda)=1,
& \forall i\in E(F),\\
0<\alpha_i(\lambda)<1,
& \forall i\in L(F),\\[1mm]
\lambda_j\geq \alpha_i(\lambda)v_{ij},
& \forall i\in[n],\ \forall j\in[m],\\
\lambda_j= \alpha_i(\lambda)v_{ij},
& \forall j\in[m],\ \forall i\in T_j(F),\\[1mm]
\alpha_{r_j}(\lambda)v_{r_jj}
\geq
\alpha_i(\lambda)v_{ij},
& \begin{aligned}[t]
\forall j\in[m]\text{ with }T_j(F)=\{w_j\},\\
\forall i\in[n]\setminus\{w_j\},
\end{aligned}\\[1mm]
y_{ij}\geq 0,
& \forall i\in[n],\ \forall j\in[m],\\
y_{ij}=0,
& \forall j\in[m],\ \forall i\notin T_j(F),\\[1mm]
\displaystyle
\sum_{i\in T_j(F)} y_{ij}
=
p_j(\lambda),
& \forall j\in[m],\\[1mm]
\displaystyle
\sum_{j\in[m]} y_{ij}\leq B_i,
& \forall i\in E(F),\\
\displaystyle
\sum_{j\in[m]} y_{ij}=B_i,
& \forall i\in L(F).
\end{dcases}
\end{equation}

Strict inequalities are handled using a standard slack-variable transformation.

\begin{remark}
Strict inequalities in the cell constraints can be handled by introducing a slack variable \(\delta>0\) and replacing each strict inequality \(a^\top z<b\) with \(a^\top z+\delta\le b\). Since the number of constraints is polynomial, this preserves polynomial-time feasibility checking.
\end{remark}

\paragraph{Recovering the equilibrium.}
Suppose that the linear system~\eqref{eq:linear-system} is feasible. From a feasible solution \((\lambda,y)\), the algorithm outputs pacing multipliers
\[
\alpha_i=\alpha_i(\lambda),\qquad \forall i\in[n].
\]

It then recovers \(x_{ij}\) from \(y_{ij}\). For every good \(j\) with \(p_j(\lambda)>0\), the algorithm outputs each allocation fraction
\[
x_{ij}=
\begin{cases}
\dfrac{y_{ij}}{p_j(\lambda)}, & i\in T_j(F),\\
0, & i\notin T_j(F).
\end{cases}
\]

The constraint \(\sum_{i\in T_j(F)}y_{ij}=p_j(\lambda)\) implies
\[
\sum_i x_{ij}
=
\frac{1}{p_j(\lambda)}
\sum_{i\in T_j(F)}y_{ij}
=
1.
\]

\begin{remark}
When \(p_j(\lambda)=0\), the payment variables do not determine a unique allocation for good \(j\). In this case, any allocation supported on \(T_j(F)\) satisfies the equilibrium conditions, since the good generates zero payment.
\end{remark}

\paragraph{Completeness of the enumeration.}
Having established soundness—demonstrating that any feasible solution found by the algorithm yields a valid SPPE, we now prove completeness. Specifically, we show that our enumeration is exhaustive and captures all existing equilibria.

Suppose that there exists an SPPE \((\alpha,x)\). Define \(\lambda\) as in Eq.~\eqref{eq:lambda-definition}.

Since the range of every possible value of \(\lambda_j\) spans the entire line, \((\alpha,x)\) must lie in some cell \(F\) of the arrangement, and the algorithm eventually enumerates this cell. Within this cell, the sets \(E(F)\), \(L(F)\), and \(T_j(F)\) coincide with the unpaced buyers, the paced buyers, and the highest paced bidders in the equilibrium, respectively. For each good \(j\), choose a witness \(r_j\) whose paced bid realizes the second-price payment. If \(|T_j(F)|\geq 2\), such a witness may be chosen from \(T_j(F)\); if \(T_j(F)=\{w_j\}\), choose \(r_j\) to be a highest losing bidder. So \(r_j\in R_j(F)\), and our algorithm eventually enumerates the witness tuple \(r=(r_j)_{j\in[m]}\). Taking \( y_{ij}=x_{ij}p_j \) for the payments, it has to satisfy the cell, witness, and budget constraints in the corresponding linear system. As a result, every SPPE will be captured by at least one enumeration.

\paragraph{Time Complexity.}
We first establish a formal bound on the number of cells induced by our hyperplane arrangement, which is the key step in the running-time analysis.

\begin{lemma}
\label{lem:constant-goods-cells}
Given any instance of a second-price pacing game with a constant number \(c\) of goods, the number of cells $|\mathcal{F}|$ is polynomial in $n$. Moreover, all cells can be explicitly enumerated in polynomial time.
\end{lemma}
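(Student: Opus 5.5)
We count the hyperplanes of the arrangement and then apply Theorem~\ref{thm:cut}. The goods number $m\le c$. For each buyer $i$ there are at most $c$ hyperplanes of the form \eqref{eq:coordinate-hyperplane} and at most $\binom{c}{2}$ hyperplanes of the form \eqref{eq:ratio-hyperplane}. Hence the total number $k$ of \emph{distinct} hyperplanes satisfies $k\le n\bigl(c+\binom{c}{2}\bigr)=O(nc^2)$. Coinciding hyperplanes (for instance, two buyers with equal value ratios) are merged, which only lowers $k$. Note that the ratio constraints $v_{ik}\lambda_j=v_{ij}\lambda_k$ are linear and homogeneous, and $\lambda_j=v_{ij}$ is affine, so we really do have an affine arrangement in $\mathbb{R}^c$. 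We also add the $c$ coordinate hyperplanes $\lambda_j=0$ so that the restriction $\lambda_j>0$ is itself a union of faces; this keeps $k=O(nc^2)$.

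\textbf{Step 1 (counting).} Theorem~\ref{thm:cut} is stated for hyperplanes in general position, and our arrangement is generally not in general position: all ratio hyperplanes pass through the origin, and many are parallel. I would therefore invoke the standard fact that general position maximizes face counts. Equivalently, one can argue directly. Every face of the arrangement is determined by its sign vector. Every face of dimension $j$ is a relatively open cell of the subarrangement induced on some flat, and that flat is an intersection of at most $c-j$ of the hyperplanes. There are at most $\sum_{s\le c}\binom{k}{s}$ such flats. On each flat, the induced arrangement has at most $k$ hyperplanes in dimension $\le c$, so it has at most $\sum_{i\le c}\binom{k}{i}$ full-dimensional cells (Zaslavsky's bound, which holds without general position as an upper bound). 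Multiplying gives
\[
|\mathcal{F}|\le \Bigl(\sum_{s=0}^{c}\binom{k}{s}\Bigr)^2=O(k^{2c})=O\bigl((nc^2)^{2c}\bigr),
\]
which is polynomial in $n$ for constant $c$. The face formula $f_j$ in Theorem~\ref{thm:cut} gives the sharper bound $O(k^c)$, but any polynomial suffices.

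\textbf{Step 2 (enumeration).} A face is specified by a sign vector in $\{<,=,>\}^k$ that is realizable, and it coincides with the system \eqref{eq:cell-constraints}. I would enumerate the faces incrementally, inserting hyperplanes one at a time. Suppose we maintain the list of realizable sign vectors for the first $t$ hyperplanes. For each such vector we test the three extensions for the $(t+1)$-st hyperplane by solving a linear feasibility problem, handling strict inequalities with the slack $\delta>0$ exactly as in the Remark following \eqref{eq:linear-system}. That is, we maximize $\delta$ and check whether the optimum is positive. Every surviving list is the face set of an arrangement of $t+1$ hyperplanes, so by Step 1 each list has polynomial size. The total work is therefore at most $k\cdot|\mathcal{F}|\cdot 3$ LPs, each with $c+1$ variables and $O(k)$ constraints, which is polynomial. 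Alternatively, one can pick a witness point in each face and use standard arrangement-construction algorithms running in $O(k^c)$ time.

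\textbf{Main obstacle.} The delicate point is Step 1, namely justifying a polynomial count without general position. In particular, we must account for lower-dimensional faces, which are the faces that carry ties. I would handle this with the flat-by-flat argument above rather than citing Theorem~\ref{thm:cut} verbatim. A perturbation argument would also work, since the number of faces is nondecreasing under perturbation into general position, but it is less self-contained. Step 2 is then routine, because its polynomial running time follows from the same bound applied to every prefix of the arrangement.
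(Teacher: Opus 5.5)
Your proof is correct, but it differs from the paper's on both the counting and the enumeration.

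\textbf{Counting.} The paper applies Theorem~\ref{thm:cut} directly and asserts, without proof, that general position maximizes the number of faces, giving $|\mathcal{F}|=O(H^c)=O(n^c)$. Your flat-by-flat argument replaces that assertion with a self-contained proof. Each $j$-dimensional face is a chamber of the arrangement induced on a flat cut out by $c-j$ hyperplanes, and the chamber bound $\sum_{i\le c}\binom{k}{i}$ needs no general position. The price is the weaker exponent $2c$.

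\textbf{Enumeration.} The paper exploits the special structure of this arrangement rather than generic machinery. On the positive orthant, every hyperplane involves a single coordinate $\lambda_j$ or a single ratio $\lambda_j/\lambda_k$. So a cell's whole sign vector is fixed by where each of these $c(c+1)/2$ one-dimensional quantities lies among at most $n$ breakpoints. This yields at most $(2n+1)^{c(c+1)/2}$ candidate states. Empty states are not pruned up front; they simply produce infeasible LPs later. Since distinct cells have distinct sign vectors, this observation already bounds $|\mathcal{F}|$ polynomially on its own. It would therefore let one bypass the general-position issue entirely, at exponent $c(c+1)/2$.

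Your incremental LP construction is more general: it works for any polynomial-size arrangement in fixed dimension and outputs only nonempty faces. However, it relies on your Step~1 bound to control the intermediate lists and costs $O(k\,|\mathcal{F}|)$ auxiliary LPs. The paper's enumeration is just a product of sorted breakpoint lists.

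Your added hyperplanes $\lambda_j=0$ serve the same purpose as the paper's restriction to the positive half-line and the positive ratio line.
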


\begin{proof}
By construction, the hyperplane arrangement consists of two types. The first restricts individual coordinates:
\[
\lambda_j = v_{ij}, 
\quad 
\forall i\in [n],\, j\in [m] \text{ s.t. } v_{ij}>0,
\]
and the second restricts ratios between coordinates:
\[
\frac{\lambda_j}{\lambda_k} = \frac{v_{ij}}{v_{ik}}, 
\quad
\forall i\in [n], \forall j,k \in [m], j<k \text{ s.t. } v_{ij}, v_{ik}>0.
\]

There are at most \(nc\) hyperplanes of the first type and at most \(nc(c-1)/2\) hyperplanes of the second type. Therefore, the total number of hyperplanes \(H\) is bounded by:
\[
H
\leq
n\left(c+\frac{c(c-1)}{2}\right)
=
\frac{nc(c+1)}{2}.
\]

Because the number of goods \(c\) is constant, we have \(H = O(n)\).

By Theorem~\ref{thm:cut}, an arrangement of \(H\) affine hyperplanes in general position in \(\mathbb{R}^c\) induces $f_j$ faces of dimension $j$. Summing over all dimensions \(j=0,\dots,c\), the total number of faces is:
\[
\sum_{j=0}^{c} f_j
=
\sum_{j=0}^{c}
\left[
\binom{H}{c-j}
\sum_{\ell=0}^{j}
\binom{H-c+j}{\ell}
\right].
\]

Notably, while our specific arrangement may not be in general position, the general-position case strictly maximizes the number of induced faces. Because the sum is dominated by the highest-order terms, the total number of faces (which we collectively refer to as cells) simplifies asymptotically to:
\begin{equation}
\label{eq:cells-enumeration-complexity}
|\mathcal{F}| 
= O(H^c) 
= O(n^c).
\end{equation}

Since \(c\) is constant, the number of cells is strictly polynomial in \(n\). 

It remains to show that all these cells can be enumerated in polynomial time. We can achieve this by projecting the arrangement onto its 1D components. For each coordinate \(\lambda_j\), the input values \(\{v_{ij}\}_{i\in[n]}\) partition the positive real line into at most \(2n+1\) distinct regions (intervals and exact points). Similarly, for each of the \(c(c-1)/2\) pairs \(j<k\), the ratios \(\{v_{ij}/v_{ik}\}_{i\in[n]}\) partition the ratio line into at most \(2n+1\) regions. 

A combinatorial state is completely defined by choosing one such region for each of the \(c(c+1)/2\) projections. This yields at most \((2n+1)^{c(c+1)/2}\) possible states to enumerate, which is also polynomial in \(n\).
\end{proof}
\begin{remark}
While this naive combinatorial enumeration is exhaustive, it may over-generate mutually inconsistent states that do not define a non-empty geometric cell in \(\mathbb{R}^c\). However, this poses no computational difficulty. We simply construct the equilibrium linear program for every generated state. If a state corresponds to an empty cell, the resulting linear system will trivially be infeasible. Thus, the combined enumeration and verification process runs entirely in polynomial time.
\end{remark}

Thus, by Lemma~\ref{lem:constant-goods-cells}, the outer loop over cells has polynomially many iterations. Then for each fixed cell \(F\), the sets \(M_i(F)\), \(E(F)\), \(L(F)\), and \(T_j(F)\) can be constructed in polynomial time. The remaining nontrivial object needed by the algorithm is the collection of valid second-price witness sets \(R_j(F)\). These sets determine which buyers can serve as witnesses for the second-price payment of each good within the fixed cell \(F\). Since the main algorithm later enumerates witness tuples from \(\prod_{j\in[m]}R_j(F)\), we must ensure that these sets can also be computed efficiently. The following lemma establishes this point.

\begin{lemma}
\label{lem:second-price-polynomial}
Given any fixed cell \(F \in \mathcal{F}\), the valid second-price witness sets \(R_j(F)\) for all goods \(j\in [m]\) can be determined in polynomial time. 
\end{lemma}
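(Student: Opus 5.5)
The plan is to bound the work of Algorithm~\ref{alg:second-price-witness} for a fixed cell \(F\) and a fixed good \(j\), and then multiply by the number of goods \(m=c\). First, within \(F\) the sets \(M_i(F)\) are constant. By Eq.~\eqref{eq:alpha-fixed-cell}, each \(\alpha_i(\lambda)\) is either the constant \(1\) or a linear function \(\lambda_{j'}/v_{ij'}\) for a fixed \(j'\in M_i(F)\). We compute \(M_i(F)\) by choosing any point of \(F\), or by reading off the recorded signs \(A_F^{\cdot},B_F^{\cdot}\). This costs \(O(c^2)\) per buyer, so \(O(nc^2)\) overall. From these we obtain \(T_j(F)=\{i: j\in M_i(F)\}\) in \(O(n)\) time per good.

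Next, we follow the case split in Algorithm~\ref{alg:second-price-witness}.
\begin{itemize}
\item If \(T_j(F)=\emptyset\), we return infeasible in constant time.
\item If \(|T_j(F)|\ge 2\), we set \(R_j(F)=T_j(F)\) at cost \(O(n)\).
\item If \(T_j(F)=\{w_j\}\), we loop over the at most \(n\) candidates \(r_j\in([n]\setminus\{w_j\})\cup\{0\}\).
\end{itemize}
For each candidate in the last case, we test feasibility of a system in the \(c\) variables \(\lambda\). The system consists of the cell constraints~\eqref{eq:cell-constraints}, of which there are \(O(nc^2)\) (linear after clearing positive denominators), together with the \(n-1\) inequalities \(\alpha_{r_j}(\lambda)v_{r_jj}\ge \alpha_i(\lambda)v_{ij}\). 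Each of these is linear in \(\lambda\) because every \(\alpha_i(\lambda)\) has a fixed linear form on \(F\).

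The strict inequalities in the cell constraints are handled as in the preceding remark. We add a variable \(\delta\), replace each \(a^\top\lambda<b\) by \(a^\top\lambda+\delta\le b\), and maximize \(\delta\) subject to \(\delta\le 1\). The original system is feasible exactly when the optimum is positive. This is a single LP with \(c+1\) variables and \(O(nc^2)\) constraints whose coefficients are input valuations, so its bit size is polynomial. It is therefore solvable in polynomial time by the ellipsoid or interior-point method; since the dimension \(c+1\) is constant, Megiddo's fixed-dimension algorithm would even give linear time in the number of constraints. Summing up, the cost per good is at most \(n\) LPs, and over all goods it is at most \(cn\) LPs, which is polynomial.

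The step I expect to need the most care is justifying that the witness test is genuinely linear and exact on \(F\). This requires that the representation of \(\alpha_i(\lambda)\) is the same at every point of \(F\), even when \(M_i(F)\) contains several indices. I would argue that any two indices in \(M_i(F)\) give equal terms throughout \(F\), because the corresponding equalities are imposed by \(A_F^{=}\) or \(B_F^{=}\). Hence the choice of representative does not change the feasible set.

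I would also remark that the number of tuples \(|\prod_j R_j(F)|\le (n+1)^c\) stays polynomial. This is not strictly part of this lemma, but it is what makes the lemma useful downstream.
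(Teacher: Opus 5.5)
Your proposal is correct and follows essentially the same route as the paper. It uses the same three-way case split on $|T_j(F)|$, at most $n$ candidate feasibility checks per good when $T_j(F)=\{w_j\}$, each linear because $\alpha_i(\lambda)$ has a fixed linear form on $F$, and repeats this for the $c$ goods; your extra details (that any representative in $M_i(F)$ yields the same linear form via the equalities recorded in $A_F^{=}$ and $B_F^{=}$, the explicit $\delta$-maximization for strict inequalities, and the fixed-dimension LP remark) make rigorous what the paper asserts without justification.
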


\begin{proof}
If \(T_j(F)=\emptyset\), then no buyer attains the claimed highest bid level \(\lambda_j\), so the cell is infeasible. If \(|T_j(F)|\ge 2\), then the second price equals the highest paced bid \(\lambda_j\).
 In these cases, Algorithm~\ref{alg:second-price-witness} resolves trivially and terminates immediately.
It remains to consider the case where exactly one buyer attains the highest bid, meaning \(|T_j(F)|=1\). 

Suppose \(T_j(F)=\{w_j\}\). To find the valid second-price witnesses, the algorithm must test every other candidate \( r_j \in ([n] \setminus \{w_j\}) \cup \{0\}\). There are exactly \(n\) such candidates to enumerate. For each candidate \(r_j\), the algorithm constructs the following linear inequality system to enforce that \(r_j\) can validly hold the second-highest bid:
\[
\alpha_{r_j}(\lambda)v_{r_jj}
\geq
\alpha_i(\lambda)v_{ij},
\qquad
\forall i\in [n]\setminus\{w_j\}.
\]

This system contains at most \(n-1\) inequalities. As established previously, within the fixed cell \(F\), the multiplier expressions \(\alpha_i(\lambda)\) reduce to fixed linear functions of \(\lambda\). Thus, each inequality in the system is strictly linear. The feasibility of this system, combined with the geometric boundaries of cell \(F\), can be verified in polynomial time using standard linear programming.

Consequently, Algorithm~\ref{alg:second-price-witness} runs in polynomial time for any single good. Because our main procedure, Algorithm~\ref{alg:constant-goods}, invokes this subroutine exactly \(c\) times per cell (once for each good), the total time required to determine the complete witness sets \(R_j(F)\) for all \(j \in [m]\) within a fixed cell remains strictly polynomial in \(n\).
\end{proof}

By Lemma~\ref{lem:second-price-polynomial}, the second-price witness sets \(R_j(F)\) can also be computed in polynomial time.

For this cell, the algorithm enumerates witness tuples
\[
r=(r_j)_{j\in[m]}\in \prod_{j\in[m]}R_j(F).
\]

Since \(|R_j(F)|\le n\) and \(m=c\), there are at most
\begin{equation}
\label{eq:sp-witness-enumeration-complexity}
\prod_{j\in[m]}|R_j(F)|= O(n^c)
\end{equation}
tuples.

For each pair \((F,r)\), the algorithm solves one linear feasibility problem with \(O(n)\) variables and \(O(n)\) constraints, since the number of goods \(c\) is constant. By the linear-programming algorithm of \citet{CLS:19}, an LP with \(N\) variables can be solved, to relative accuracy \(\delta\), in expected time \(O^*\!\left(N^\omega \log\frac{N}{\delta}\right)\), 
where \(\omega\approx 2.37\) is the matrix multiplication exponent. Therefore, in our setting each LP can be solved in expected time

\begin{equation}
\label{eq:LP-complexity}
O^*\!\left(n^\omega \log\frac{n}{\delta}\right). 
\end{equation}

Combining Eqs.~\eqref{eq:cells-enumeration-complexity}, \eqref{eq:sp-witness-enumeration-complexity} and \eqref{eq:LP-complexity}, the total expected running time is
\[
O(n^c)\cdot O(n^c)\cdot 
O^*\!\left(n^\omega \log\frac{n}{\delta}\right)
=
O^*\!\left(n^{2c+\omega}\log\frac{n}{\delta}\right).
\]

Since \(c\) is constant, this is polynomial in \(n\), and hence Algorithm~\ref{alg:constant-goods} runs in polynomial time.
\end{proof}

\section{Extension to a Nonconstant Number of Goods}

We now show that the constant-goods result can be extended to instances with a nonconstant number of goods, as long as the number of distinct good types is constant.

\begin{definition}[Good types]
Two goods \(j\) and \(j'\) are said to have the same {\em type} if they have the same valuation profile across all buyers, that is,
\[
v_{ij}=v_{ij'}, \qquad \forall i\in[n].
\]

This relation partitions the set of goods into equivalence classes, which we call {\em good types}.
\end{definition}

\begin{corollary}
\label{cor:constant-good-types}
Given any instance of a second-price pacing game in which both the number of goods and the number of buyers are nonconstant, if the number of good types is constant, an SPPE can be computed in polynomial time.
\end{corollary}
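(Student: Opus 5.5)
The plan is to reduce the instance with many goods but only $c$ good types to the constant-goods setting, and then run the machinery behind Theorem~\ref{thm:main} on the reduced instance.

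\textbf{Key observation.} Goods of the same type receive identical paced bids from every buyer. Indeed, if $v_{ij}=v_{ij'}$ for all $i$, then $\alpha_i v_{ij}=\alpha_i v_{ij'}$ for any $\alpha$. Hence they share the same highest bid $h_j(\alpha)$, the same second price $p_j(\alpha)$, and the same set of highest bidders. So in any SPPE, $\lambda_j$ depends only on the type of $j$.

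This lets us work in a $c$-dimensional $\lambda$-space indexed by types $\tau\in[c]$. The hyperplane arrangement of Eqs.~\eqref{eq:coordinate-hyperplane}--\eqref{eq:ratio-hyperplane} depends only on the distinct valuation columns, so it has at most $nc(c+1)/2$ hyperplanes. By Lemma~\ref{lem:constant-goods-cells}, the cells are therefore polynomially many and enumerable.

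Within a cell $F$, the quantities $\alpha_i(\lambda)$, $M_i(F)$, $E(F)$, $L(F)$ and $T_\tau(F)$ are defined type-wise exactly as before. Algorithm~\ref{alg:second-price-witness} computes a witness set $R_\tau(F)$ per type. We enumerate witness tuples $r\in\prod_{\tau}R_\tau(F)$, of which there are $O(n^c)$. The single witness $r_\tau$ is then used for every good of type $\tau$, which is legitimate because all such goods have the same second price.

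\textbf{Aggregating goods in the LP.} Let $m_\tau$ be the multiplicity of type $\tau$. Instead of per-good payment variables, introduce aggregated variables $Y_{i\tau}=\sum_{j\in\tau}y_{ij}$ and impose
\[
\sum_{i\in T_\tau(F)}Y_{i\tau}=m_\tau\,p_\tau(\lambda).
\]
The budget constraints become $\sum_\tau Y_{i\tau}\le B_i$ for $i\in E(F)$ and $\sum_\tau Y_{i\tau}=B_i$ for $i\in L(F)$. The LP then has $O(nc)$ variables and polynomially many constraints, and its encoding size is polynomial in $n$ and $\log m_\tau$.

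\textbf{Soundness and completeness.} For soundness, from a feasible $Y$ we need per-good allocations. Set
\[
x_{ij}=\frac{Y_{i\tau}}{m_\tau p_\tau(\lambda)}
\]
for each good $j$ of type $\tau$, i.e., split each buyer's aggregate share uniformly across the copies. Each good then satisfies $\sum_i x_{ij}=1$ and is supported on $T_\tau(F)$, and each buyer's total payment equals $\sum_\tau Y_{i\tau}$. This yields an SPPE exactly as in the recovery step of Theorem~\ref{thm:main}. The case $p_\tau=0$ is handled as in the earlier remark.

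For completeness, given any SPPE, the $\lambda$ vector is type-constant by the key observation. Summing $y_{ij}=x_{ij}p_j$ over goods of each type gives a feasible $Y$ for the cell containing that $\lambda$ and a suitable witness tuple, as in the original completeness argument. Zero-valued types are removed by Lemma~\ref{lem:remove-zero-lambda-goods}.

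\textbf{Main obstacle.} The step needing the most care is justifying aggregation: an SPPE may split the copies of a type unevenly among tied winners, and the per-type LP must still capture it. This is resolved by noting that only the aggregated payments $Y_{i\tau}$ enter the budget and no-unnecessary-pacing conditions, and that uniform splitting realizes any feasible aggregate.

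The running time is $O(n^c)\cdot O(n^c)\cdot\mathrm{poly}(n,\log m)$ plus $O(nm)$ preprocessing to group goods into types. This is polynomial, which proves Corollary~\ref{cor:constant-good-types}.
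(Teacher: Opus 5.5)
Your proposal is correct and is essentially the paper's argument: the paper builds the aggregated $c$-good instance with $\widehat v_{i\tau}=|S_\tau|\,v_{ij}$, applies Theorem~\ref{thm:main} to it as a black box, and lifts back via $x_{ij}=\widehat x_{i\tau}$. Up to the rescaling $\widehat\lambda_\tau=m_\tau\lambda_\tau$, this is exactly your type-indexed LP in the variables $Y_{i\tau}$ together with your uniform split. The black-box version is a bit cleaner, since scaling every bid on a good by $m_\tau$ preserves the bid ordering and scales the second price by $m_\tau$, so you need not re-verify the cell enumeration, witness sets, and LP soundness/completeness for the multiplicity-weighted system.
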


\begin{proof}
Let \(\mathcal T\) be the set of good types, and let
\[
|\mathcal T|=c,
\]
where \(c\) is a fixed constant, by definition. For each type \(\tau\in\mathcal T\), let \(S_\tau\) denote the set of goods of type \(\tau\). By definition, for any \(j,j'\in S_\tau\),
\[
v_{ij}=v_{ij'}, \qquad \forall i\in[n].
\]

We now construct an aggregated instance with one aggregate good for each type \(\tau\in\mathcal T\). For every buyer \(i\), we define its valuation for the aggregate good \(\tau\) by
\[
\widehat v_{i\tau}
:=
\sum_{j\in S_\tau} v_{ij}.
\]

Since all goods in \(S_\tau\) have the same 
\[
\widehat v_{i\tau}
=
|S_\tau|\,v_{ij},
\qquad \text{for any } j\in S_\tau.
\]

Now the aggregated instance has exactly \(c\) goods. Therefore, by Theorem~\ref{thm:main}, we can compute an SPPE of the aggregated instance in polynomial time. Let this equilibrium be \( (\widehat\alpha,\widehat x)\). We now expand this SPPE back to original. For every buyer \(i\), keep the same pacing multiplier:
\[
\alpha_i=\widehat\alpha_i.
\]

For every original good \(j\in S_\tau\), we let
\[
x_{ij}=\widehat x_{i\tau},
\qquad \forall i\in[n].
\]

Each original good is fully allocated, because
\[
\sum_i x_{ij}
=
\sum_i \widehat x_{i\tau}
=
1.
\]

It remains to verify that the recovered pair \((\alpha,x)\) is an SPPE of the original instance.

Fix a type \(\tau\in\mathcal T\) and a good \(j\in S_\tau\). For every buyer \(i\), we have
\[
\widehat\alpha_i\widehat v_{i\tau}
=
\widehat\alpha_i |S_\tau| v_{ij}
=
|S_\tau| \alpha_i v_{ij}.
\]

The ordering of paced bids on the aggregate good \(\tau\) is exactly the same as the ordering of paced bids on every original good \(j\in S_\tau\). In particular, the set of highest paced bidders and the set of second-price witnesses are preserved.

Let \(\widehat p_\tau\) be the true payment of the aggregate good \(\tau\), and let \(p_j\) be the true payment of any original good \(j\in S_\tau\). Since all paced bids in the aggregate instance are multiplied by \(|S_\tau|\), we have
\[
\widehat p_\tau
=
|S_\tau|p_j.
\]

Now consider the total payment of buyer \(i\) from goods of type \(\tau\) in the original instance. By the construction of \(x\), we have
\[
\sum_{j\in S_\tau} x_{ij}p_j
=
\sum_{j\in S_\tau} \widehat x_{i\tau}p_j
=
\widehat x_{i\tau}|S_\tau|p_j
=
\widehat x_{i\tau}\widehat p_\tau.
\]

This is exactly buyer \(i\)'s payment for the aggregate good \(\tau\) in the aggregated instance. So every buyer has the same total payment in the original instance as in the aggregated instance.

As a result, budget feasibility is preserved. Moreover, if a buyer is paced in the aggregated equilibrium, then it exhausts its budget there, and also exhausts its budget after expansion. If a buyer does not exhaust its budget, its pacing multiplier is \(1\) in the aggregated equilibrium, and the same multiplier is used in the original instance. Thus the ``no unnecessary pacing'' condition is also preserved.

Finally, allocation in the aggregated instance is supported only on highest paced bidders. The highest paced bidder sets are preserved for every original good, and the recovered allocation also assigns goods only to highest paced bidders.

As proof above, \((\alpha,x)\) is an SPPE of the original
instance. This equilibrium can be computed in polynomial time.
\end{proof}

\section{Conclusion}
In this paper, we presented a polynomial-time algorithm for computing an exact second-price pacing equilibrium in markets with a constant number of goods, successfully resolving an open question raised by \citet{CL:25} and \citet{yan:26}. Furthermore, our \(\lambda\)-space reformulation offers an analytical tool for mapping pacing multipliers to bid levels, effectively bypassing traditional dimensionality bottlenecks. 

Given the known PPAD-hardness of computing exact equilibria in general, unrestricted markets, several critical questions remain. Most notably, while \citet{CL:25} established a \(1/3\)-inapproximability bound, does there exist a polynomial-time algorithm for computing a non-trivial constant-factor approximate SPPE (e.g., a \(0.99\)-approximation) in general markets? More ambitiously, what is the tight bound?

\section{Acknowledgments}
This work was supported by the National Natural Science Foundation of China (Grants 62472029 and 62572476) and the Key Laboratory of Interdisciplinary Research of Computation and Economics (Shanghai University of Finance and Economics), Ministry of Education.
\bibliography{aaai2026}
\clearpage

\makeatletter
\@ifundefined{isChecklistMainFile}{\newif\ifreproStandalone\reproStandalonefalse}{}
\makeatother

\end{document}